\newenvironment{Figure}
{\par\medskip\noindent}
{\par\medskip}
\theoremstyle{definition}
\newtheorem{definition}{Definition}[section]
\theoremstyle{remark}
\newtheorem{remark}{Remark}
\newtheorem{theorem}{Theorem}[section]
\newcommand{\comment}[1]{}
\begin{document}

\title{\textbf{Impact of Sampling on Locally Differentially Private Data Collection}}

\author[1,2]{Sayan Biswas}
\author[3]{Graham Cormode}
\author[4,5]{Carsten Maple}

\affil[1]{{INRIA, France}}
\affil[2]{{LIX, \'{E}cole Polytechnique, France}}
\affil[3]{{Dept. of Computer Science, University of Warwick, UK}}
\affil[4]{{WMG, University of Warwick, UK}}
\affil[5]{{Alan Turing Institute, UK}}
\affil[ ]{{sayan.biswas@inria.fr, g.cormode@warwick.ac.uk, cm@warwick.ac.uk}}
\date{}

\setcounter{Maxaffil}{0}
\renewcommand\Affilfont{\itshape\small}

\maketitle

\begin{abstract}
With the recent bloom of data, there is a huge surge in threats against individuals' private information. Various techniques for optimizing privacy-preserving data analysis are at the focus of research in the recent years. In this paper, we analyse the impact of sampling on the utility of the standard techniques of frequency estimation, which is at the core of large-scale data analysis, of the locally deferentially private data-release under a pure protocol. We study the case in a distributed environment of data sharing where the values are reported by various nodes to the central server, e.g., cross-device Federated Learning. We show that if we introduce some random sampling of the nodes in order to reduce the cost of communication, the standard existing estimators fail to remain unbiased. We propose a new unbiased estimator in the context of sampling each node with certain probability and compute various statistical summaries of the data using it. We propose a way of sampling each node with personalized sampling probabilities as a step to further generalisation, which leads to some interesting open questions in the end. We analyse the accuracy of our proposed estimators on synthetic datasets to gather some insight on the trade-off between communication cost, privacy, and utility.
\end{abstract}

\begin{multicols}{2}

\section{Introduction}
To address the age-old battle between privacy and utility, various optimisation techniques to analyse the data.
There is a massive explosion of data in the recent few years, and with the plethora of data that is being generated everyday, their threats against their privacy is increasing manifold. Hence, the age-old battle between privacy and utility of data has become all the more important catering to the urge to dissect and analyse users' personal data for various kinds of analytics. Differential privacy (DP)~\cite{DworkDP1,DworkDP2} have become the standard for privacy protection in the last few years. To efface the need of a central trusted curator, a local variant of DP called the Local Differential Privacy (LDP)~\cite{DuchiLDP} has been intensively studied of late. With LDP, users get an opportunity to obfuscate their data locally and this noisy data from the end of the users is reported to the central server. The privacy level can be adjusted according to the requirement of the users by , by adding some tuning the privacy parameter $\epsilon$ of the LDP mechanism. 

The idea of LDP aligns well with the modern day distributed machine learning, where the idea is to reduce the dependency on a potentially adversarial central server for carrying out the model training. This gave the rise to the concept of Federated Learning (FL)~\cite{McMahanMRA16FL}, where a local model is trained independently at various nodes and the updates are communicated to the central server to train the main model, aggregating all the local updates. In particular, cross-device Federated Learning \cite{hard2019federated, yang2018applied, chen2019federated, ramaswamy2019federated, leroy2019federated}, the data from the users are used to train a local model on individual devices and the model update is communicated to the central server, making sure that one's personal data never leaves their device. However, in such a setting, often the communication cost is compromised by reporting the updates from every node to estimate the frequency of each value in the domain, and subsequently other statistical summaries of the data, in the central server.

Recently, a substantial focus of FL community focus has been on optimizing sampling techniques~\cite{Abadi2016,dwork2010pan,rizk2020federated,RizkOpt21,WangPureProtocols,Cai2020}. Another branch of recent work has been in the direction of frequency estimation under LDP protocols~\cite{WangPureProtocols,Cormode21}. In this work, we aimed to incorporate the idea of sampling under LDP protocols and analyze its potential impact on standard frequency estimation techniques. 

In this paper, we aim to look at the impact of introducing some sampling techniques on such estimates of LDP data. With millions of users holding data, a useful tool for the service providers is to gather a right number of data points which would be optimal and sufficient for performing various kinds of analysis. In summary, as a main contribution of this work, we point out that the standard estimators fail to stay unbiased when sampling techniques are introduced in a distributed learning framework. Hence, we propose a new unbiased estimator generalising the existing work by Wang et al. in \cite{WangPureProtocols}. We analyse the trade-off between the communication cost and the utility and performance of our estimator under a pure LDP protocol through experiments on synthetic datasets. We illustrate, empirically, the usefulness of sampling by showing that sampling a huge number of users does not drastically improve the quality of the analysis performed after a certain point, therefore, implying the necessity of setting appropriate sampling probabilities to optimize the trade-off between communication cost, privacy, and the quality of the estimators, which contribute massively in the analytics.

\section{Preliminaries }

\begin{definition}[Differential privacy~\cite{DworkDP1,DworkDP2}]
\label{def:dp}
For a certain query, a randomizing mechanism $\mathcal{R}$ provides \emph{$\epsilon$-differential privacy (DP)} if, for all neighbouring\footnote{differing in exactly one place} datasets, $D$ and $D'$, and all $S \subseteq$ Range($\mathcal{R}$), we have
\begin{equation*}
\mathbb{P}[\mathcal{R}(D) \in S] \leq e^{\epsilon}\,\mathbb{P}[\mathcal{R}(D') \in S]
\end{equation*}
\end{definition}

\begin{definition}[Local differential privacy~\cite{DuchiLDP}]
\label{def:ldp}
Let $\mathcal{X}$ and $\mathcal{Y}$ denote the spaces of original and noisy data, respectively. A randomizing mechanism $\mathcal{R}$ provides \emph{$\epsilon$-local differential privacy (LDP)} if, for all $x,\,x'\,\in\,\mathcal{X}$, and all $y\,\in\,\mathcal{Y}$, we have
\begin{equation*}
\mathbb{P}[\mathcal{R}(x)=y] \leq e^{\epsilon}\,\mathbb{P}\left[\mathcal{R}(x')=y\right]
\end{equation*}
\end{definition}

\begin{definition}[Pure LDP protocols]~\cite{WangPureProtocols}
A LDP mechanism $\mathcal{R}$ is \emph{pure} iff there exist $p^*>q^*$ such that for all $v_1$ and $v_2\neq v_1$:
\begin{align}
    \mathcal{P}\left[\mathcal{R}(v_1)\in\{y\colon v_1 \in \operatorname{Support}(y)\}\right]=p^*, \text{ and}\nonumber\\
    \mathbb{P}\left[\mathcal{R}(v_2)\in\{y\colon v_1 \in
    \operatorname{Support}(y)\}\right]=q^*
\end{align}
where for any input $x\in\mathcal{X}$, the set $\{y\in\mathcal{Y}\colon x \in\operatorname{Support}(y)\}$ is the set of outputs in $\mathcal{Y}$ that \emph{support} the input $x$ with a non-zero probability of being observed via the mechanism $\mathcal{R}$, i.e., $\{y\colon x \in\operatorname{Support}(y)\}=\{y\colon\mathbb{P}\left[\mathcal{R}(x)=y\right]\neq 0\}$.
\end{definition}

\begin{definition}[Direct Encoding~\cite{kairouz2016discrete}]
\label{def:DE}
Let $\mathcal{X}$ be a discrete domain of size $d$. Then \emph{direct encoding} (DE) a.k.a. \emph{k-randomized response} ($k$-RR) mechanism, $\mathcal{R}_{\text{DE}}$, is a locally differentially private mechanism that stochastically maps the domain $\mathcal{X}$ onto itself (i.e., $\mathcal{Y}=\mathcal{X}$), given by
\begin{equation*}
    \mathcal{R}_{\text{DE}}(y|x)=
        \begin{cases}
		    p=c\,e^{\epsilon} & \text{, if $x=y$}\\
            q=c, & \text{, otherwise}
		\end{cases}
\end{equation*}
for any $x,\,y\,\in\mathcal{X}$, where $c=\frac{1}{e^{\epsilon}+d-1}$.

\end{definition}

In this work we focus in the setting of DE where it perturbs and fix a discrete domain $\mathcal{X}$ of size $m$ for our analysis, supposing DE perturbs values from $\mathcal{X}$ and to some noisy values in $\mathcal{X}$. Let there are $n\in\mathbb{N}$ nodes, each holding some value from $\mathcal{X}$ obfuscated by DE. Let the Support function for DE be $\text{Support}_{\text{DE}}(i)=\{i\}$, i.e., each obfuscated output value $i\in\mathcal{X}$ supports the input $i\in\mathcal{X}$. 
\begin{remark}\label{rem:DEPure}
    Setting $\text{Support}(i)=\{i\},\,p^*=p$, and $q^*=q$,  DE becomes is pure LDP protocol, shown by Wang et al. in \cite{WangPureProtocols}. 
\end{remark}

Wang et al. \cite{WangPureProtocols} proposed an unbiased frequency estimator, $c_{\text{DE}}(i)$, of the original value $i$ going through a pure LDP protocol as:
\begin{equation}\label{eq:WangUnbiased}
        c(i)=\frac{\sum\limits\limits_j\mathbbm{1}_{\text{Support}(y^j)}(i)-nq^*}{p^*-q^*}
\end{equation}
where $y^j$ denotes the noisy value reported by the $j^{\text{th}}$ node. Thus, using \eqref{eq:WangUnbiased} in the context of DE, for any value $i\in\mathcal{X}$ we obtain:
\begin{equation}\label{eq:DEUnbiased}
    c_{\text{DE}}(i)=\frac{\sum\limits\limits_{j=1}^n\mathbbm{1}_{\{X_j=i\}}-nq}{p-q}
\end{equation}
where $\mathbbm{1}_{E}$ is the indicator function for any event $E$ such that
\begin{equation*}
\mathbbm{1}_E=
    \begin{cases}
		1 & \text{ if $E$ happens}\\
        0, & \text{ otherwise}
	\end{cases}
\end{equation*}

We explore this idea to investigate the behaviour of $c_{\text{DE}}$ if each node is independently sampled to report its value, perturbed with DE, with some probability $\pi$. Let $S$ be the random variable representing the number of nodes which have been reported to the central server. Hence $\mathbb{P}(S>n)=\mathbb{P}(S<0)=0$. Taking the same estimator $c_{\text{DE}}(i)$ in the setup of random sampling of each node with an independent probability of $\pi$, we get:
\begin{align}
    \mathbb{E}\left(c_{\text{DE}}(i)\right)
    =\mathbb{E}\left(\frac{\sum\limits\limits_{j=1}^S\mathbbm{1}_{\text{Support}(y^j)}(i)-nq^*}{p^*-q^*}\right)\nonumber\\
    =\mathbb{E}\left(\frac{\sum\limits\limits_{j=1}^S\mathbbm{1}_{\{X_j=i\}}-nq}{p-q}\right)\nonumber\\
    =\frac{\mathbb{E}(S)\mathbb{E}(\mathbbm{1}_{\{X_{j}=i\}}-nq)}{p-q} \text{ [Wald's Equation~\cite{Wald}]}\nonumber\\
    =\frac{n\pi(f_ip+(1-f_i)q)-nq}{p-q}\nonumber\\
    =nf_i\pi-\frac{nq(1-\pi)}{p-q}\label{eq:WangBiased1}
\end{align}

We see that putting $\pi=1$ in \eqref{eq:WangBiased1}, implying every node is sampled in each round, gives us the same result as in \cite{WangPureProtocols}. 
\begin{restatable}{theorem}{th:WangEstBiased}\label{th:WangEstBiased}
    If we introduce some sampling probability $\pi<1$ for each node, $c_{\text{DE}}$ becomes a biased frequency estimator.
\end{restatable}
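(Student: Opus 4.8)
The plan is to show that the expectation computed in \eqref{eq:WangBiased1} differs from the true count $nf_i$ whenever $\pi<1$, so that the estimator carries a nonzero bias. Recall that an estimator is unbiased precisely when its expectation equals the quantity it estimates; here the estimand is the true count $nf_i$ of value $i$ among the $n$ nodes, which is exactly the value recovered when $\pi=1$ (as already observed just below \eqref{eq:WangBiased1}).

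First I would take the expression $\mathbb{E}\left(c_{\text{DE}}(i)\right)=nf_i\pi-\frac{nq(1-\pi)}{p-q}$ already derived in \eqref{eq:WangBiased1} as the starting point; no further probabilistic computation is needed, since Wald's equation and the independence of sampling and perturbation have been invoked there. I would then form the bias by subtracting the estimand, obtaining
\[
\mathbb{E}\left(c_{\text{DE}}(i)\right)-nf_i = -\,n(1-\pi)\left(f_i+\frac{q}{p-q}\right).
\]

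Finally I would argue this quantity is strictly nonzero for every $\pi<1$. Since $1-\pi>0$ and, by Definition~\ref{def:DE}, $q=c>0$ while $p-q=c(e^{\epsilon}-1)>0$, the factor $f_i+\frac{q}{p-q}$ is strictly positive regardless of the value $f_i\ge 0$. Hence the bias is strictly negative, the estimator systematically undercounts, and $c_{\text{DE}}$ fails to be unbiased whenever $\pi<1$, as claimed.

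The computation itself is routine, so the main point to get right is conceptual rather than technical: identifying the correct estimand (namely $nf_i$, the value obtained at $\pi=1$) and verifying that the bias factor cannot vanish for any admissible $f_i$. The one genuine subtlety is justifying the use of Wald's equation in \eqref{eq:WangBiased1} — namely that the sample size $S$ is independent of the per-node indicators and has finite mean $\mathbb{E}(S)=n\pi$ — since the whole argument rests on that expectation being correct; I would state these independence and integrability conditions explicitly before relying on the derived expectation.
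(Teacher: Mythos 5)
Your proposal is correct and takes essentially the same route as the paper: both arguments start from the expectation already derived in \eqref{eq:WangBiased1} and show it cannot equal $nf_i$ when $\pi<1$. Your explicit factorization of the bias as $-n(1-\pi)\left(f_i+\frac{q}{p-q}\right)$ together with the remark that $q>0$ for DE makes the strictness slightly more airtight than the paper's inequality chain, but it is the same argument.
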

\begin{proof}
    We recall that $1\geq p\geq q \geq 0$ by the definition of pure LDP protocols, and $0\leq\pi\leq 1$. Therefore, $\frac{nq(1-\pi)}{p-q}\geq 0$, and hence, $\mathbb{E}(c_{\text{DE}}(i)) \leq nf_i\pi \leq nf_i$ and equality is attained iff $\pi=1$.
\end{proof}

\section{Unbiased frequency estimation}
Motivated from Theorem \ref{th:WangEstBiased}, we proceed to device an unbiased estimator for DE, $g_{\text{DE}}$, incorporating the random sampling aspect, defined as follows:
\begin{equation}\label{eq:unbiased1}
    g_{\text{DE}}(i)=\frac{c_{\text{DE}}(i)}{\pi}+\frac{nq(1-\pi)}{(p-q)\pi}  
\end{equation}

\begin{restatable}{theorem}{th:unbiasedmean}\label{th:unbiasedmean}
If each node has an independent sampling probability of $\pi$, $g_{\text{DE}}$ is an unbiased estimator of the frequencies of the values in $\mathcal{X}$ observed under DE. 
\end{restatable}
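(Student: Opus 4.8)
The plan is to compute $\mathbb{E}(g_{\text{DE}}(i))$ directly and verify that it equals the true count $nf_i$, relying entirely on the bias expression for $c_{\text{DE}}$ already derived in \eqref{eq:WangBiased1}. Since the estimator in \eqref{eq:unbiased1} is an affine function of the random quantity $c_{\text{DE}}(i)$, I would start by invoking linearity of expectation to split it into the scaled random term $c_{\text{DE}}(i)/\pi$ and the deterministic correction $nq(1-\pi)/((p-q)\pi)$.

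First I would observe that the additive correction term is non-random: it depends only on the fixed parameters $n$, $q$, $p$, and the sampling probability $\pi$, so its expectation is itself. This reduces the computation to $\mathbb{E}(g_{\text{DE}}(i)) = \mathbb{E}(c_{\text{DE}}(i))/\pi + nq(1-\pi)/((p-q)\pi)$. Next I would substitute the known value $\mathbb{E}(c_{\text{DE}}(i)) = nf_i\pi - nq(1-\pi)/(p-q)$ from \eqref{eq:WangBiased1}. Dividing by $\pi$ turns the first summand into $nf_i$ and the bias summand into $-nq(1-\pi)/((p-q)\pi)$, which is exactly the negative of the deterministic correction. The two terms cancel, leaving $\mathbb{E}(g_{\text{DE}}(i)) = nf_i$, as required.

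There is no serious obstacle here: the definition of $g_{\text{DE}}$ in \eqref{eq:unbiased1} was reverse-engineered precisely so that its additive term annihilates the sampling-induced bias of $c_{\text{DE}}$ while the multiplicative factor $1/\pi$ rescales the surviving signal back to the true count. The entire argument is a one-line cancellation once \eqref{eq:WangBiased1} is in hand. The only points requiring any care are that $\pi > 0$, so that the division is legitimate, and that $p \neq q$, which is guaranteed by the strict inequality $p^* > q^*$ in the definition of a pure LDP protocol (here specialised to $p^* = p$, $q^* = q$); together these ensure every denominator is well-defined.

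I would also remark, for completeness, that the expectation is taken jointly over both the random sampling set $S$ and the DE perturbation noise, exactly as in the derivation of \eqref{eq:WangBiased1} via Wald's equation; no new probabilistic input is needed beyond what that earlier computation already supplies, which is why the result follows so directly.
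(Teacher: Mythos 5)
Your proposal is correct and follows exactly the paper's own argument: the paper's proof is literally ``Immediate from \eqref{eq:WangBiased1} \ldots using the linearity of expectation,'' and you have simply written out that cancellation explicitly. The added remarks about $\pi>0$ and $p\neq q$ are sound but not new content relative to the paper.
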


\begin{proof}
Immediate from \eqref{eq:WangBiased1} in Theorem \ref{th:WangEstBiased} and using the linearity of expectation. 
\end{proof}

For the simplicity of notation, let $f_i$ be the random variable representing the fraction of times the value $i\in\mathcal{X}$ is reported to the central server. In \cite{WangPureProtocols} Wang et al. define the \emph{approximate variance} of any random variable which is a function of $f_i$, say $RV(f_i)$, as  $\operatorname{Var}^*(RV(f_i))=\lim\limits_{f_i\rightarrow 0}\operatorname{Var}(RV(f_i))$.

\begin{restatable}{theorem}{th:unbiasedvar}\label{th:unbiasedvar}
In the event of independently sampling the nodes with some probability $\pi$, the approximate variance of $c_{\text{DE}}$ is given by: $$\operatorname{Var}^*(g_{\text{DE}}(i))=\frac{\operatorname{Var}^*(c_{\text{DE}}(i))}{\pi^2}=\frac{n(q-q^2\pi)}{(p-q)^2\pi}$$
\end{restatable}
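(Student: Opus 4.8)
The plan is to reduce everything to the variance of the random sum $T := \sum_{j=1}^S \mathbbm{1}_{\{X_j=i\}}$ and then exploit the affine relationship between $g_{\text{DE}}$ and $c_{\text{DE}}$. First I would observe that the second summand $\frac{nq(1-\pi)}{(p-q)\pi}$ in the definition \eqref{eq:unbiased1} of $g_{\text{DE}}$ is a deterministic constant, so adding it leaves the variance unchanged, while dividing $c_{\text{DE}}(i)$ by $\pi$ scales the variance by $1/\pi^2$. Since this identity survives the limit $f_i\to 0$ defining $\operatorname{Var}^*$, the first equality $\operatorname{Var}^*(g_{\text{DE}}(i))=\operatorname{Var}^*(c_{\text{DE}}(i))/\pi^2$ follows immediately, and it remains only to compute $\operatorname{Var}^*(c_{\text{DE}}(i))$.

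Next, writing $c_{\text{DE}}(i)=(T-nq)/(p-q)$ and noting that $nq$ and $p-q$ are constants, I have $\operatorname{Var}(c_{\text{DE}}(i))=\operatorname{Var}(T)/(p-q)^2$. The substantive work is evaluating $\operatorname{Var}(T)$, where the number of summands $S$ is itself random, so $S$ cannot be treated as fixed. Instead I would apply the law of total variance (the second-moment companion of the Wald identity already used in \eqref{eq:WangBiased1}), giving
\begin{equation*}
\operatorname{Var}(T)=\mathbb{E}(S)\operatorname{Var}(Y)+\operatorname{Var}(S)\,\mathbb{E}(Y)^2,
\end{equation*}
where $Y=\mathbbm{1}_{\{X_j=i\}}$ is Bernoulli with $\mathbb{E}(Y)=f_i p+(1-f_i)q$ and hence $\operatorname{Var}(Y)=\mathbb{E}(Y)(1-\mathbb{E}(Y))$, and where $S\sim\mathrm{Binomial}(n,\pi)$ so that $\mathbb{E}(S)=n\pi$ and $\operatorname{Var}(S)=n\pi(1-\pi)$.

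Finally I would pass to the limit $f_i\to 0$, under which $\mathbb{E}(Y)\to q$ and $\operatorname{Var}(Y)\to q(1-q)$, yielding
\begin{equation*}
\operatorname{Var}^*(T)=n\pi\,q(1-q)+n\pi(1-\pi)q^2=n\pi\bigl(q-\pi q^2\bigr).
\end{equation*}
Dividing by $(p-q)^2$ gives $\operatorname{Var}^*(c_{\text{DE}}(i))$, and then dividing by $\pi^2$ via the first step produces the claimed $\frac{n(q-q^2\pi)}{(p-q)^2\pi}$. The main obstacle is the correct handling of the random index $S$: the naive substitution $S=n\pi$ discards the $\operatorname{Var}(S)\,\mathbb{E}(Y)^2$ term and would replace the $q^2\pi$ in the numerator with $q^2$, giving the wrong constant. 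Retaining the $\operatorname{Var}(S)=n\pi(1-\pi)$ contribution is precisely what converts $q-q^2$ into $q-\pi q^2$ and matches the stated formula.
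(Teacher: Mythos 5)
Your proposal is correct and follows essentially the same route as the paper: both reduce to the random sum $\sum_{j=1}^{S}\mathbbm{1}_{\{X_j=i\}}$, apply the variance decomposition $\mathbb{E}(S)\operatorname{Var}(Y)+\operatorname{Var}(S)\mathbb{E}(Y)^2$ with $S\sim\mathrm{Binomial}(n,\pi)$, take the limit $f_i\to 0$, and transfer the result to $g_{\text{DE}}$ via the affine relation (constant shift plus scaling by $1/\pi$). Your explicit emphasis on retaining the $\operatorname{Var}(S)\mathbb{E}(Y)^2$ term is exactly the step the paper's derivation also hinges on, so there is nothing to add.
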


\begin{proof}
We start by deriving the actual variance of $g_{\text{DE}}$.
\begin{align}
\operatorname{Var}\left(c_{\text{DE}}(i)\right)
=\operatorname{Var}\left(\frac{\sum\limits\limits_{j=1}^S\mathbbm{1}_{\{X_j=i\}}-nq}{p-q}\right) \nonumber \\
\text{[$S$ is the r.v. representing the number of nodes sampled]}\nonumber\\
=\frac{\operatorname{Var}\left(\sum\limits\limits_{j=1}^S\mathbbm{1}_{\{X_j=i\}}-nq\right)}{(p-q)^2} \nonumber \\
=\frac{\operatorname{Var}\left(\sum\limits\limits_{j=1}^S\mathbbm{1}_{\{X_j=i\}}\right)}{(p-q)^2} \nonumber \\
=\frac{\mathbb{E}(S)\operatorname{Var}(\mathbbm{1}_{\{X_1=i\}})+\mathbb{E}((\mathbbm{1}_{\{X_1=i\}})^2\operatorname{Var}(S)}{(p-q)^2}\nonumber \\
\text{[Random sums of RVs~\cite{Wald}]}\nonumber\\
=\frac{n\pi f_i p(1-p) +n\pi (1-f_i)q(1-q)}{(p-q)^2} \nonumber \\
+\frac{(f_ip+(1-f_i)q)^2(n\pi(1-\pi))}{(p-q)^2} \nonumber\\
\therefore \operatorname{Var}^*(c_{\text{DE}}(i))= n\pi\frac{q(1-q)+q^2(1-\pi)}{(p-q)^2} \nonumber \\
= n\pi\frac{q-q^2\pi}{(p-q)^2}\nonumber
\end{align} 

Now we observe that $\operatorname{Var}(g_{\text{DE}}(i))=\frac{\operatorname{Var}(c_{\text{DE}}(i))}{\pi^2}$, by definition of $g_{\text{DE}}$. Therefore, 
\begin{align}
\operatorname{Var}^*(g_{\text{DE}}(i))=\frac{\operatorname{Var}^*(c_{\text{DE}}(i))}{\pi^2}=\frac{n(q-q^2\pi)}{(p-q)^2\pi}\nonumber
\end{align}
\end{proof}

Observe $\operatorname{Var}^*(g_{\text{DE}}(i)) \geq \operatorname{Var}^*c_{\text{DE}}(i))$, with equality iff $\pi=1$, as we would expect since we are introducing more randomness and less information in $g_{\text{DE}}(i)$ compared to $c_{\text{DE}}(i)$ by engendering random sampling of each node. 

\begin{definition}[Normalized variance]
    The \emph{normalised variance} of any random variable $X$ is defined as  
    $$\operatorname{Var}_{\text{norm}}(X)=\frac{\operatorname{Var}(X)}{\mathbb{E}(X)}$$
\end{definition}

Normalized variance can be useful when comparing two random variables with different means, in order to account for larger variance for larger means.

\begin{theorem}
\label{normvarg}
 $\operatorname{Var}^*_{\text{norm}}(g_{\text{DE}}(i))=\mathcal{O}\left(\frac{1}{\pi^3 n}\right)$
\end{theorem}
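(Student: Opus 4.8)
The plan is to unfold the definition of normalized variance, $\operatorname{Var}_{\text{norm}}(X)=\operatorname{Var}(X)/\mathbb{E}(X)$, for $X=g_{\text{DE}}(i)$ and to feed it the two ingredients already in hand: the approximate variance from Theorem~\ref{th:unbiasedvar}, namely $\operatorname{Var}^*(g_{\text{DE}}(i))=\frac{n(q-q^2\pi)}{(p-q)^2\pi}$, and the mean of $g_{\text{DE}}(i)$. First I would pin down the mean in the approximate regime $f_i\to 0$ that underlies $\operatorname{Var}^*$. Although $g_{\text{DE}}$ is unbiased by Theorem~\ref{th:unbiasedmean}, the relevant normalizing scale is driven by how many times the symbol $i$ is actually reported; when $i$ is rare these reports are dominated by the noise floor, so the expected number of reports behaves like $\mathbb{E}(S)\,q=n\pi q$. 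Hence the mean scale entering the normalization is $\Theta(n\pi)$, with the constant controlled by $q$.

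Next I would form the ratio and simplify. Substituting the $\Theta(n/\pi)$ approximate variance against the $\Theta(n\pi)$ mean scale, one power of $n$ and the $q$-factors cancel, and the leading behavior is governed by $\tfrac{n/\pi}{(n\pi)^2}$. Because $p$ and $q$ depend only on the privacy budget $\epsilon$ and the domain size $m$ through $c=1/(e^{\epsilon}+m-1)$ (Definition~\ref{def:DE}), and not on $n$ or $\pi$, I would absorb every $p,q$ expression into the implied constant and keep only the dependence on $n$ and $\pi$. This collapses the ratio to $\mathcal{O}\!\left(\tfrac{1}{\pi^3 n}\right)$, which is exactly the claim; in particular the subleading $q^2\pi$ term in the numerator of $\operatorname{Var}^*$ is of lower order and gets swallowed by the big-$\mathcal{O}$.

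The main obstacle here is bookkeeping rather than any sharp inequality. The delicate point is correctly identifying the normalizing scale as $\Theta(n\pi)$ and accounting for how it enters, since this is precisely where both the $n^{-1}$ and the $\pi^{-3}$ originate, and being careful about the order in which the limit $f_i\to 0$ (defining the approximate variance) and the asymptotics $\pi\to 0$, $n\to\infty$ are taken, so that dropping the lower-order terms is legitimate. Once the mean is fixed and $p,q$ are treated as constants in $n$ and $\pi$, reading off the big-$\mathcal{O}$ is immediate.
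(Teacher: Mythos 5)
Your proposal reaches the stated bound by essentially the same route as the paper: take $\operatorname{Var}^*(g_{\text{DE}}(i))=\frac{n(q-q^2\pi)}{(p-q)^2\pi}$ from Theorem~\ref{th:unbiasedvar}, divide by a $(n\pi)^2$ normalizer, treat $p$ and $q$ as constants in $n$ and $\pi$, and read off $\mathcal{O}\left(\frac{1}{\pi^3 n}\right)$. The one place your write-up does not cohere is the normalization step: you open by quoting the definition $\operatorname{Var}_{\text{norm}}(X)=\operatorname{Var}(X)/\mathbb{E}(X)$, identify the relevant mean scale as $\Theta(n\pi)$, and then divide the approximate variance by $(n\pi)^2$ rather than by $n\pi$. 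Dividing by the first power, as the quoted definition dictates, gives $\frac{q-q^2\pi}{(p-q)^2\pi^2}=\Theta(1/\pi^2)$ with no decay in $n$, which is not the claimed bound. The paper has the same tension and resolves it silently by computing $\operatorname{Var}\left(g_{\text{DE}}(i)/\mathbb{E}(S)\right)=\operatorname{Var}(g_{\text{DE}}(i))/(n\pi)^2$, i.e., it normalizes by the square of $\mathbb{E}(S)$ (the expected sample size) rather than by $\mathbb{E}(g_{\text{DE}}(i))=nf_i$, which would vanish in the $f_i\to 0$ limit defining $\operatorname{Var}^*$ and make the literal definition degenerate. Your choice of $n\pi q$ versus the paper's $n\pi$ as the scale is immaterial since $q$ is a constant; what matters is the unexplained jump from the first power to the square. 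If you state explicitly that the quantity being bounded is $\operatorname{Var}^*\left(g_{\text{DE}}(i)/(n\pi)\right)$, your argument is complete and coincides with the paper's.
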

\begin{proof}
\begin{align}
\operatorname{Var}_{\text{norm}}\left(g_{\text{DE}}(i)\right)
= \operatorname{Var}\left(\frac{g_{\text{DE}}(i)}{\mathbb{E}(S)}\right) \nonumber \\
=\operatorname{Var}\left(\frac{g_{\text{DE}}(i)}{n\pi}\right)
=\frac{\operatorname{Var}(g_{\text{DE}}(i))}{n^2\pi^2}\nonumber\\
\implies \operatorname{Var}^*\left(\frac{g_{\text{DE}}(i)}{n\pi}\right)=\frac{\operatorname{Var}^*(g_{\text{DE}}(i))}{n^2\pi^2}\nonumber\\
=\frac{n(q-q^2\pi)}{(p-q)^2n^2\pi^3} \text{ [Th.\ref{th:unbiasedvar}]}
=\frac{q-q^2\pi}{(p-q)^2\pi^3n}
= \mathcal{O}\left(\frac{1}{\pi^3 n}\right) \nonumber
\end{align}
\end{proof}

We note that for small value of $\pi$, the normalized variance of the estimator $g_{\text{DE}}$ would blow up as it is of the order $\frac{1}{\pi^3 n}$. But this is not unexpected, as with a low sampling probability, it is more likely that we would give rise to fewer nodes that are actually sampled to report their values, giving rise to less information for the central server, which should result in a greater variance. We acknowledge a trade-off between the bias of an estimator and its increasing variance. In particular, we see that without compensating for the bias of $c_{\text{DE}}$ to obtain $g_{\text{DE}}$ by scaling it with $\frac{1}{\pi}$ and adding up $\frac{nq(1-\pi)}{\pi(p-q)}$, for a small sampling probability $\pi$, we would have the bias which will grow up to be a tremendously low a quantity, always giving a massively conservative and negative estimate for the value of $i$ as observed, especially if the number of nodes involved ($n$) is huge (e.g. in millions), which is often the case in federated learning. Precisely, observe from \eqref{eq:WangBiased1} that as $\lim\limits_{\pi\to0}c_{\text{DE}}=\frac{nq}{p-q}$, implying that we would be getting a constant and negative estimate for every $i\in\mathcal{X}$, which would make the analysis involving the frequencies rather absurd.

Now we look to improve upon the proposed unbiased frequency estimator $g_{\text{DE}}$. Let $S$ be the random variable representing the number of nodes sampled in a round if each node is independently sampled with probablity $\pi$. We proceed to define an improved frequency estimator of the elements of $\mathcal{X}$ under DE through a very natural approach of replacing $n$ by $S$ in the definition of $c_{\text{DE}}$.

Let $\hat c_{\text{DE}}(i)=\frac{\sum\limits\limits_{j=1}^S\mathbbm{1}_{\{X_j=i\}}-Sq}{\pi(p-q)}$. In order to use $\hat{c}_{DE}$ as the frequency estimator for any element $i\in\mathcal{X}$, it is crucial to probe if it has any bias. 
\begin{restatable}{theorem}{th:betterunbiased}\label{th:betterunbiased}
$\hat c_{\text{DE}}$ is an unbiased estimator of the frequencies of the elements of $\mathcal{X}$ being perturbed via DE which are reported by the nodes which are sampled independently.
\end{restatable}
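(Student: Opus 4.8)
The goal is to verify that the expectation of $\hat c_{\text{DE}}(i)$ equals the true count $nf_i$ of nodes holding value $i$, so that no additive correction term survives (in contrast to $c_{\text{DE}}$ in \eqref{eq:WangBiased1}). The plan is to mimic the expectation computation carried out in \eqref{eq:WangBiased1}, but now tracking the two ingredients that distinguish $\hat c_{\text{DE}}$ from $c_{\text{DE}}$: the subtracted term is $Sq$ rather than $nq$, and the normalising factor is $\pi(p-q)$ rather than $(p-q)$.

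First I would rewrite the numerator by absorbing the subtraction inside the random sum, $\sum_{j=1}^S\mathbbm{1}_{\{X_j=i\}}-Sq=\sum_{j=1}^S(\mathbbm{1}_{\{X_j=i\}}-q)$, so that $\hat c_{\text{DE}}(i)=\frac{1}{\pi(p-q)}\sum_{j=1}^S(\mathbbm{1}_{\{X_j=i\}}-q)$. This regroups everything as a single random sum of i.i.d.\ centred terms, which is exactly the form to which Wald's equation applies in one stroke.

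Next I would take expectation and apply Wald's equation to get $\mathbb{E}\!\left(\sum_{j=1}^S(\mathbbm{1}_{\{X_j=i\}}-q)\right)=\mathbb{E}(S)\,(\mathbb{E}(\mathbbm{1}_{\{X_1=i\}})-q)$. Substituting $\mathbb{E}(S)=n\pi$ (since $S\sim\mathrm{Binomial}(n,\pi)$) and $\mathbb{E}(\mathbbm{1}_{\{X_1=i\}})=f_ip+(1-f_i)q$ from the DE perturbation probabilities, the bracket collapses to $f_ip+(1-f_i)q-q=f_i(p-q)$. The numerator therefore becomes $n\pi f_i(p-q)$, which cancels cleanly against the denominator $\pi(p-q)$ to leave $nf_i$, establishing unbiasedness. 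Note that it is precisely the replacement of $nq$ by $Sq$ that makes the factor $\mathbb{E}(S)$ common to both resulting terms, which is what permits the full cancellation.

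The step that requires the most care is the invocation of Wald's equation, because $S$ is random and appears both as the upper limit of the sum and (before regrouping) in the term $Sq$. Wald's identity needs either that $S$ be a stopping time for the filtration generated by the summands, or that $S$ be independent of the i.i.d.\ summands. I would justify this by appealing to the sampling model: each node is sampled by an independent $\mathrm{Bernoulli}(\pi)$ draw that is independent of its noisy reported value, so $S$ is independent of the reported values and the identity applies. An equivalent, and perhaps cleaner, alternative I would mention to sidestep the random-sum machinery entirely is to introduce per-node sampling indicators $B_j$ and write the numerator as $\sum_{j=1}^n B_j(\mathbbm{1}_{\{X_j=i\}}-q)$ over all $n$ nodes; linearity together with the independence of $B_j$ from $X_j$ then yields the result directly without Wald's equation.
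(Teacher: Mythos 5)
Your proposal is correct and follows essentially the same route as the paper: both apply Wald's equation to the random sum, substitute $\mathbb{E}(S)=n\pi$ and $\mathbb{E}(\mathbbm{1}_{\{X_j=i\}})=f_ip+(1-f_i)q$, and cancel to obtain $nf_i$; your regrouping of the numerator as $\sum_{j=1}^{S}(\mathbbm{1}_{\{X_j=i\}}-q)$ versus the paper's handling of $\mathbb{E}(Sq)$ separately by linearity is only a cosmetic difference. Your added care about why Wald applies, and the alternative via per-node sampling indicators, are sound (the latter is in fact the decomposition the paper itself uses later for the generalized estimator $\mathcal{T}(i)$).
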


\begin{proof}
\begin{align}
\mathbb{E}\left(\frac{\sum\limits\limits_{j=1}^S\mathbbm{1}_{\{X_j=i\}}-Sq}{\pi(p-q)}\right) \nonumber \\
= \frac{\mathbb{E}(S)\mathbb{E}(\mathbbm{1}_{\{X_{j}=i\}})-\mathbb{E}(Sq)}{\pi(p-q)}\text{ [Wald's Equation~\cite{Wald}]}\nonumber \\
=\frac{n\pi(f_ip+(1-f_i)q)-n\pi q}{\pi(p-q)} 
= nf_i\nonumber
\end{align}
\end{proof}

\begin{theorem}
$\operatorname{Var}(\hat c_{\text{DE}}(i)) \geq \operatorname{Var}(g_{\text{DE}}(i))$, i.e., $g_{\text{DE}}$ gives a better (more confident) estimate for the frequencies than $\hat c_{\text{DE}}$, which is a naive and immediate extension from $c_{\text{DE}}$. 
\end{theorem}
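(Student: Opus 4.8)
The plan is to express both estimators as affine functions of the same random count and then compare their variances term by term. Writing $T:=\sum_{j=1}^{S}\mathbbm{1}_{\{X_j=i\}}$, I would first substitute the definition of $c_{\text{DE}}$ into \eqref{eq:unbiased1} and simplify to obtain $g_{\text{DE}}(i)=\frac{T-n\pi q}{\pi(p-q)}$, while by definition $\hat c_{\text{DE}}(i)=\frac{T-qS}{\pi(p-q)}$. Both are $T$ scaled by $\frac{1}{\pi(p-q)}$ and then shifted, so the entire difference between the two estimators is localized to the quantity subtracted before scaling: a deterministic $n\pi q$ for $g_{\text{DE}}$ versus the random $qS$ for $\hat c_{\text{DE}}$. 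This reduction is the key simplification, since it isolates the gap to the choice of centering.

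Since subtracting a constant leaves variance unchanged, $\operatorname{Var}(g_{\text{DE}}(i))=\frac{\operatorname{Var}(T)}{\pi^2(p-q)^2}$, which is exactly the quantity already evaluated in the proof of Theorem \ref{th:unbiasedvar}. For $\hat c_{\text{DE}}$ I would expand
\begin{equation*}
\operatorname{Var}(T-qS)=\operatorname{Var}(T)+q^2\operatorname{Var}(S)-2q\operatorname{Cov}(T,S),
\end{equation*}
so that
\begin{equation*}
\operatorname{Var}(\hat c_{\text{DE}}(i))-\operatorname{Var}(g_{\text{DE}}(i))=\frac{q^2\operatorname{Var}(S)-2q\operatorname{Cov}(T,S)}{\pi^2(p-q)^2}.
\end{equation*}
With $\operatorname{Var}(S)=n\pi(1-\pi)$, the sign of the difference is controlled entirely by the numerator, and I would evaluate it in the approximate-variance regime $f_i\to0$ used throughout, recycling the random-sum moment computation from Theorem \ref{th:unbiasedvar}.

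The main obstacle is the cross term $\operatorname{Cov}(T,S)$: the reported count $T$ and the number of sampled nodes $S$ are positively associated, since sampling more nodes tends to inflate $T$, so this covariance cannot simply be discarded. The cleanest route I see is to read the comparison through the design of $g_{\text{DE}}$ itself, which was built precisely to replace the random normalization by its mean $n\pi$; from this viewpoint $\hat c_{\text{DE}}$ reintroduces the sampling-count fluctuation that $g_{\text{DE}}$ had stripped out, and the surplus it carries is the nonnegative term $\frac{q^2\operatorname{Var}(S)}{\pi^2(p-q)^2}=\frac{nq^2(1-\pi)}{\pi(p-q)^2}\ge0$, vanishing exactly when $\pi=1$. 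The one genuinely delicate step, and where I would spend the most care, is controlling the contribution of $\operatorname{Cov}(T,S)$ and confirming it does not overturn this sign; pinning down that single inequality is what separates a heuristic argument from a complete proof.
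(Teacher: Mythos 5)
Your reduction is exactly right, and it isolates the step that the paper's own proof silently skips: the paper expands $\operatorname{Var}(T-qS)$ as $\operatorname{Var}(T)+q^{2}\operatorname{Var}(S)$ with no cross term, i.e.\ it treats $T=\sum_{j=1}^{S}\mathbbm{1}_{\{X_j=i\}}$ and $S$ as uncorrelated, which, as you observe, they are not. The gap in your proposal is that you stop at flagging $\operatorname{Cov}(T,S)$ as the delicate point instead of computing it --- and the computation is both short and decisive. Writing $Z_j$ for the sampling indicator of node $j$, $Y_j=\mathbbm{1}_{\{X_j=i\}}$, and $\mu:=\mathbb{E}(Y_j)=f_ip+(1-f_i)q$, independence across nodes and of sampling from perturbation give
\begin{equation*}
\operatorname{Cov}(T,S)=\sum_{j=1}^{n}\operatorname{Cov}(Z_jY_j,Z_j)=n\left(\pi\mu-\pi^{2}\mu\right)=n\pi(1-\pi)\mu,
\end{equation*}
so the numerator you isolated equals
\begin{equation*}
q^{2}\operatorname{Var}(S)-2q\operatorname{Cov}(T,S)=n\pi(1-\pi)\,q\,(q-2\mu)\le -\,n\pi(1-\pi)\,q^{2}\le 0,
\end{equation*}
since $\mu\ge q$. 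The covariance therefore \emph{does} overturn the sign: for $\pi<1$ and $q>0$ one gets $\operatorname{Var}(\hat c_{\text{DE}}(i))<\operatorname{Var}(g_{\text{DE}}(i))$, and in the approximate-variance regime $\operatorname{Var}^{*}(\hat c_{\text{DE}}(i))=\frac{nq(1-q)}{\pi(p-q)^{2}}\le\frac{nq(1-q\pi)}{\pi(p-q)^{2}}=\operatorname{Var}^{*}(g_{\text{DE}}(i))$.

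So neither your argument nor the paper's establishes the stated inequality; the inequality is in fact reversed, and the theorem as stated is false. The intuition that replacing the random centering $qS$ by its mean $n\pi q$ should reduce variance fails here precisely because $qS$ is positively correlated with $T$ and so acts as a partial control variate: subtracting it cancels some of the fluctuation of $T$ rather than adding to it. Your instinct to distrust the discarded cross term was the correct one; carrying the computation through yields the opposite conclusion, $\operatorname{Var}(\hat c_{\text{DE}}(i))\le\operatorname{Var}(g_{\text{DE}}(i))$ with equality iff $\pi=1$ or $q=0$, and shows that the paper's proof rests on an invalid variance decomposition.
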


\begin{proof}
\begin{align}
\operatorname{Var}\left(\hat c_{\text{DE}}(i)\right)      \nonumber \\
=\operatorname{Var}\left(\frac{\sum\limits\limits_{j=1}^S\mathbbm{1}_{\{X_j=i\}}-Sq}{\pi(p-q)}\right) \nonumber \\
= \frac{\operatorname{Var}\left(\sum\limits\limits_{j=1}^S\mathbbm{1}_{\{X_j=i\}}\right)+\operatorname{Var}(S)q^2}{\pi^2(p-q)^2} \nonumber \\
= \operatorname{Var}(g_{\text{DE}}(i))+\frac{\operatorname{Var}(S)q^2}{\pi^2(p-q)^2}\nonumber \\
\end{align}
It follows immediately that $\operatorname{Var}(g_{\text{DE}}(i))+\frac{\operatorname{Var}(S)q^2}{\pi^2(p-q)^2} \geq \operatorname{Var}(g_{\text{DE}}(i)$ as $\frac{\operatorname{Var}(S)q^2}{\pi^2(p-q)^2}\geq 0$.
\end{proof}

\begin{restatable}{theorem}{equival}\label{th:equival}
For every $i\in\mathcal{X}$, we have $0\leq g_{\text{DE}}(i)\leq n$ iff $0\leq c_{\text{DE}}(i)\leq n$ on an average, i.e., ensuring our proposed frequency estimate evaluating a reasonable frequency for any $i\in\mathcal{X}$ is equivalent to that of the estimate proposed by Wang et al.
\end{restatable}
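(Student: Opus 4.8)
The plan is to read ``on an average'' as passing to expectations and to collapse each of the two range conditions into a single inequality on the true frequency $f_i$. The two ingredients are the unbiasedness statements already in hand: Theorem~\ref{th:unbiasedmean} gives $\mathbb{E}(g_{\text{DE}}(i)) = nf_i$ for the sampling-corrected estimator, while Wang's estimator $c_{\text{DE}}$ of \eqref{eq:DEUnbiased} is unbiased in its native fully-reporting regime, so that there $\mathbb{E}(c_{\text{DE}}(i)) = nf_i$ as well (this is the $\pi=1$ specialisation recorded just after \eqref{eq:WangBiased1}).

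First I would substitute $\mathbb{E}(g_{\text{DE}}(i)) = nf_i$ into $0 \le \mathbb{E}(g_{\text{DE}}(i)) \le n$, obtaining $0 \le nf_i \le n$, i.e.\ $0 \le f_i \le 1$. Then the identical substitution for $c_{\text{DE}}$ in its native setting reduces $0 \le \mathbb{E}(c_{\text{DE}}(i)) \le n$ to the same statement $0 \le f_i \le 1$. Since both range conditions are equivalent to one and the same constraint on $f_i$, they are equivalent to each other for every $i\in\mathcal{X}$; and because $f_i$ is a proportion it lies in $[0,1]$ automatically, so both sides hold simultaneously, which is exactly the asserted ``iff''. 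This is precisely the intended reading: our sampling-corrected estimate is ``reasonable'' under exactly the conditions that make Wang's estimate reasonable.

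It clarifies matters to run the same argument at the level of the observed count. Writing $\Sigma = \sum_{j=1}^S \mathbbm{1}_{\{X_j=i\}}$ and using $c_{\text{DE}}(i) = (\Sigma - nq)/(p-q)$ together with $g_{\text{DE}}(i) = (\Sigma - nq\pi)/(\pi(p-q))$ (the simplified form of \eqref{eq:unbiased1}), the condition $0 \le c_{\text{DE}}(i) \le n$ is equivalent to $nq \le \Sigma \le np$, whereas $0 \le g_{\text{DE}}(i) \le n$ is equivalent to $nq\pi \le \Sigma \le np\pi$. Replacing $\Sigma$ by its mean $\mathbb{E}(\Sigma) = n\pi(f_i p + (1-f_i)q)$ (Wald's equation, as in \eqref{eq:WangBiased1}), and the analogous mean $n(f_i p + (1-f_i)q)$ for Wang's fully-reporting count $\sum_{j=1}^n \mathbbm{1}_{\{X_j=i\}}$, each pair of bracketing inequalities again collapses to $0 \le f_i \le 1$, since $f_i p + (1-f_i)q = f_i(p-q)+q$ ranges over $[q,p]$ exactly when $f_i$ ranges over $[0,1]$.

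The one genuine subtlety---and the step I would be most careful to state explicitly---is fixing the regime in which $c_{\text{DE}}$ is compared. If one instead keeps $c_{\text{DE}}$ under sampling with the same bound $n$, then by \eqref{eq:WangBiased1} its mean is $nf_i\pi - \tfrac{nq(1-\pi)}{p-q}$, and $0 \le \mathbb{E}(c_{\text{DE}}(i)) \le n$ is \emph{not} equivalent to $0 \le f_i \le 1$: the affine map $g_{\text{DE}} = \tfrac{1}{\pi}c_{\text{DE}} + \tfrac{nq(1-\pi)}{(p-q)\pi}$, being strictly increasing with slope $1/\pi>0$, carries $[0,n]$ to the shifted interval $[-\tfrac{nq(1-\pi)}{p-q},\,\pi n - \tfrac{nq(1-\pi)}{p-q}]$ rather than to itself. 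Thus the real content is that the reasonableness \emph{of the unbiased estimate in each method's own regime} is governed by the identical condition $0 \le f_i \le 1$; making this matching of regimes explicit, rather than leaning on the superficial coincidence that both means equal $nf_i$, is where the argument needs the most care.
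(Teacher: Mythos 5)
Your proof is correct and follows essentially the same route as the paper's: both arguments pass to expectations of the underlying counts and observe that each of the two range conditions collapses to a trivially true inequality (you phrase it as $0\le f_i\le 1$, the paper as $p\ge q$; both amount to $0\le f_i(p-q)\le p-q$), and your second, count-level computation reproducing $nq\pi \le \mathbb{E}(\Sigma) \le np\pi$ is precisely the paper's derivation. Your explicit remark that $c_{\text{DE}}$ must be read in its native fully-reporting regime (sum to $n$, not to $S$) for the equivalence to hold is a point the paper's proof uses silently without stating, and is worth making.
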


\begin{proof}
We proceed to show this in two parts: 
\begin{enumerate}
    \item[(i)] $0 \leq c_{\text{DE}}(i) \Leftrightarrow 0 \leq g_{\text{DE}}(i)$ on an average
    \item[(ii)] $n \geq c_{\text{DE}}(i) \Leftrightarrow n \geq g_{\text{DE}}(i)$ on an average
\end{enumerate}

Proceeding with (i), we obtain:
\begin{align}
c_{\text{DE}}(i) \geq 0
\Leftrightarrow \frac{\sum\limits\limits_{j=1}^n\mathbbm{1}_{\{X_j=i\}}-nq}{p-q} \geq 0\nonumber\\
\Leftrightarrow \sum\limits\limits_{j=1}^n\mathbbm{1}_{\{X_j=i\}}-nq \geq 0 \text{ [$p\geq q$ for pure LDP]}\nonumber\\
\Leftrightarrow \sum\limits\limits_{j=1}^n\mathbbm{1}_{\{X_j=i\}} \geq nq
\Leftrightarrow \mathbb{E}\left(\sum\limits\limits_{j=1}^n\mathbbm{1}_{\{X_j=i\}}\right) \geq nq\nonumber\\
\Leftrightarrow n(f_ip+(1-f_i)q) \geq nq 
\Leftrightarrow p \geq q
\end{align}
That's the trivial condition assumed to make DE a pure LDP protocol.

Now focussing on $g_{\text{DE}}$, we get:
\begin{align}
g_{\text{DE}}(i) \geq 0
\Leftrightarrow\frac{\sum\limits\limits_{j=1}^S\mathbbm{1}_{\{X_j=i\}}-nq}{\pi(p-q)}+\frac{nq(1-\pi)}{(p-q)\pi} \geq 0\nonumber\\
\Leftrightarrow\sum\limits\limits_{j=1}^S\mathbbm{1}_{\{X_j=i\}}-nq\pi \geq 0 \text{ [$p\geq q$ for pure LDP]}\nonumber\\
\Leftrightarrow\sum\limits\limits_{j=1}^S\mathbbm{1}_{\{X_j=i\}} \geq nq\pi
\end{align}

Taking the expectation of both sides: 
\begin{align}
\Leftrightarrow\mathbb{E}\left(\sum\limits\limits_{j=1}^S\mathbbm{1}_{\{X_j=i\}}\right) \geq nq\pi \nonumber\\
\Leftrightarrow n\pi(f_ip+(1-f_i)q) \geq nq\pi \nonumber\\
\Leftrightarrow p \geq q
\end{align}

Establishing (i), now we shift to prove (ii):
\begin{align}
c_{\text{DE}}(i) \leq n
\Leftrightarrow\frac{\sum\limits_{j=1}^n\mathbbm{1}_{\{X_j=i\}}-nq}{p-q} \leq n\nonumber\\
\Leftrightarrow\sum\limits_{j=1}^n\mathbbm{1}_{\{X_j=i\}}-nq \leq n(p-q) \nonumber\\
\Leftrightarrow\sum\limits_{j=1}^n\mathbbm{1}_{\{X_j=i\}} \leq np\nonumber\\
\Leftrightarrow\mathbb{E}\left(\sum\limits_{j=1}^n\mathbbm{1}_{\{X_j=i\}}\right) \leq np\nonumber\\
\Leftrightarrow n(f_ip+(1-f_i)q) \leq np \nonumber\\
\Leftrightarrow q \leq p
\end{align}
But $q \leq p$ is the trivial condition assumed to make direct encoding a pure LDP protocol.

\begin{align}
g_{\text{DE}}(i) \leq n
\Leftrightarrow\frac{\sum\limits_{j=1}^S\mathbbm{1}_{\{X_j=i\}}-nq}{\pi(p-q)}+\frac{nq(1-\pi)}{(p-q)\pi} \leq n\nonumber\\
\Leftrightarrow\sum\limits_{j=1}^S\mathbbm{1}_{\{X_j=i\}}-nq\pi \leq n(p-q)\pi \text{ [$p>q$ for pure]}\nonumber\\
\Leftrightarrow\sum\limits_{j=1}^S\mathbbm{1}_{\{X_j=i\}} \leq np\pi\nonumber\\
\Leftrightarrow\mathbb{E}\left(\sum\limits_{j=1}^S\mathbbm{1}_{\{X_j=i\}}\right) \leq np\pi \nonumber\\
\Leftrightarrow n\pi(f_ip+(1-f_i)q) \leq np\pi
\Leftrightarrow q \leq p
\end{align}
\end{proof}

\section{Experimental results}
We performed experiments on synthetic datasets to evaluate and visualize the performance of our estimator and observed that, indeed, as we increase $\pi$, the estimation by $g_{\text{DE}}(i)$ approximates the original distribution better. We considered 50,000 data points sampled from a domain $\mathcal{X}$ of size 100, following the distributions Binomial$(100,0.5)$ and Binomial$(50,0.6)$+Binomial$(50,0.4)$. We considered two extremes of the sampling probabilities for each node by setting $\pi=0.1$ and $\pi=0.9$. Figures \ref{binomest} \& \ref{bimodest} illustrate the performance of our estimator in these two settings for the two different datasets.

We computed the total variation (TV) distance between the original distribution and $\hat  c_{\text{DE}}(i)$ for the synthetically generated dataset sampled from a $Bin(100,0.5)$ distribution and illustrated the results in Figure \ref{TV dist}, along with communication cost for sampling probabilities ranging from $\pi=0.1$ to $\pi=0.9$. We can see a clear trade-off between the communication cost and the TV distance.

\begin{Figure}
\begin{center} 
\scalebox{0.9}{
\includegraphics[scale=0.36]{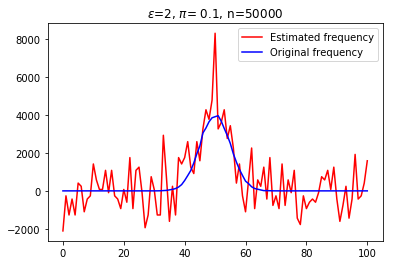}
\includegraphics[scale=0.36]{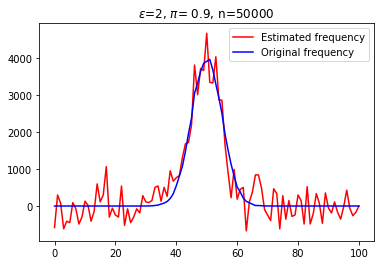}
}
\end{center}
\captionof{figure}{Data sampled from $Bin(100,0.5)$}
\label{binomest}
\end{Figure}
\begin{Figure}
\begin{center} 
\scalebox{0.9}{
\includegraphics[scale=0.36]{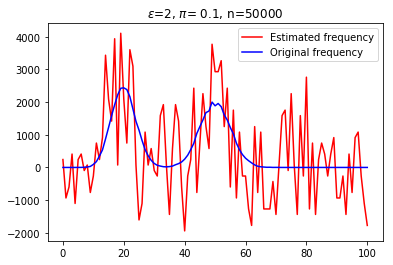}
\includegraphics[scale=0.36]{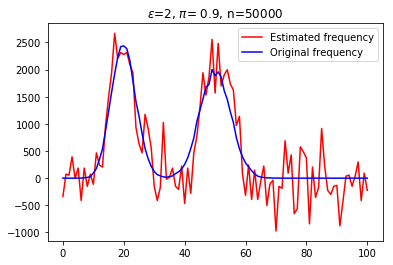}
}
\end{center}
\captionof{figure}{Data sampled from $Bin(50,0.6)+Bin(50,0.4)$}
\label{bimodest}
\end{Figure}

\begin{Figure}
\begin{center} 
\scalebox{0.9}{
\includegraphics[scale=0.36]{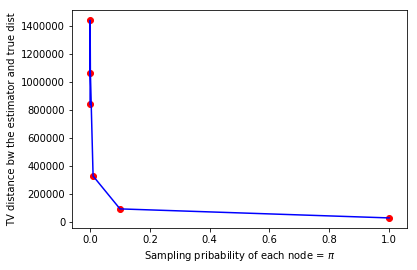}
\includegraphics[scale=0.36]{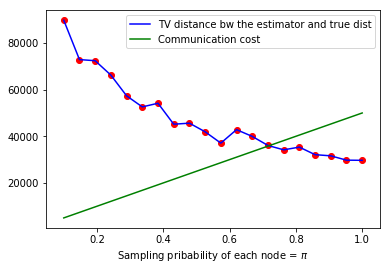}
}
\captionof{figure}{Total variation distance between our proposed estimator and distribution of the original data, and communication cost = $\mathcal{O}(n\pi)$ varying with different sampling probabilities}
\label{TV dist}
\end{center}
\end{Figure}

\section{Generalized sampling probabilities}
In all the previous results, we assumed that the values from each node is sampled independently with the same probability $\pi$. Now we enable us with the flexibility not to require the sampling probability of each node to be the same, opening doors to a lot of interesting paths of research ahead. We explore the setting where the $j^{\text{th}}$ node is sampled independently with probability $\pi_{j}$  for every node $j \in \{1,2,\ldots,n\}$. Note that if we have $\pi = \pi_1=\pi_2=\ldots\pi_n$, we are left with the sampling environment that we addressed previously. 

Let $S$ be the random variable representing the total number of nodes sampled under this flexible setting of having personalized sampling probabilities. We proceed to derive an unbiased frequency estimator in such a generalized case. 

\begin{restatable}{theorem}{th:generalest}
    Let $h(i)=\frac{\sum\limits_{j=1}^S\mathbbm{1}_{\{X_j=i\}}-nq}{p-q}$, where $X_j$ is the random variable denoting the value reported by the $j^{\text{th}}$ node. Then, setting $\mathcal{T}(i)$ as
    $\frac{nh(i)}{\sum\limits_{j=1}^n\pi_j}+\frac{nq\left(n-\sum\limits_{j=1}^n\pi_j\right)}{\sum\limits_{j=1}^n\pi_j(p-q)}$, it
    becomes an unbiased frequency estimator of every value $i \in \mathcal{X}$ with $$\operatorname{Var}^*(\mathcal{T}(i))=\frac{n^2\sum\limits_{j=1}^n(q\pi_j(1-q\pi_j))}{(\sum\limits_{j=1}^n\pi_j)^2(p-q)^2}$$
\end{restatable}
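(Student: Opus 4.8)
The plan is to reduce the random-index sum to a fixed-range sum of independent terms, since the heterogeneous probabilities $\pi_j$ destroy the i.i.d.\ structure on which Wald's equation (used in Theorems~\ref{th:unbiasedmean} and \ref{th:unbiasedvar}) relies. First I would introduce independent sampling indicators $B_j\sim\mathrm{Bernoulli}(\pi_j)$, one per node, so that $S=\sum_{j=1}^n B_j$ and the reported contribution rewrites as $\sum_{j=1}^S\mathbbm{1}_{\{X_j=i\}}=\sum_{j=1}^n B_j\,\mathbbm{1}_{\{X_j=i\}}$. Because the sampling of node $j$ is independent of its perturbed report $X_j$, and all nodes act independently, the variables $Z_j:=B_j\,\mathbbm{1}_{\{X_j=i\}}$ are mutually independent Bernoulli variables with $\mathbb{E}(Z_j)=\pi_j\,r_i$, where $r_i:=f_ip+(1-f_i)q$ is the per-node probability of reporting $i$. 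Writing $\Pi:=\sum_{j=1}^n\pi_j$ for brevity, this reformulation is the crux of the whole argument.

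For unbiasedness I would compute $\mathbb{E}(h(i))$ by linearity: $\mathbb{E}\!\left(\sum_{j=1}^n Z_j\right)=r_i\,\Pi$, so $\mathbb{E}(h(i))=\frac{r_i\,\Pi-nq}{p-q}$. Substituting into $\mathbb{E}(\mathcal{T}(i))=\frac{n\,\mathbb{E}(h(i))}{\Pi}+\frac{nq(n-\Pi)}{\Pi(p-q)}$, the two $n^2q$ contributions cancel and the $\Pi$-weighted terms collapse, leaving $\mathbb{E}(\mathcal{T}(i))=\frac{n(r_i-q)}{p-q}=\frac{nf_i(p-q)}{p-q}=nf_i$. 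This is exactly the role played by the additive correction term in the definition of $\mathcal{T}(i)$ (mirroring that of $g_{\text{DE}}$), and one checks that setting all $\pi_j=\pi$ recovers the uniform-sampling result.

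For the variance, note $\mathcal{T}(i)$ is an affine function of $h(i)$ whose additive term is deterministic, so $\operatorname{Var}(\mathcal{T}(i))=\frac{n^2}{\Pi^2}\operatorname{Var}(h(i))$, and the constant $-nq$ inside $h(i)$ drops out, giving $\operatorname{Var}(h(i))=\frac{1}{(p-q)^2}\operatorname{Var}\!\left(\sum_{j=1}^n Z_j\right)$. Independence of the $Z_j$ turns this into a sum of Bernoulli variances, $\sum_{j=1}^n \pi_j r_i(1-\pi_j r_i)$. Finally, passing to the approximate variance amounts to letting $f_i\to 0$, whence $r_i\to q$ and each term tends to $q\pi_j(1-q\pi_j)$, yielding $\operatorname{Var}^*(\mathcal{T}(i))=\frac{n^2\sum_{j=1}^n q\pi_j(1-q\pi_j)}{\Pi^2(p-q)^2}$, as claimed.

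The main obstacle is precisely that with distinct $\pi_j$ the summands are no longer identically distributed, so neither Wald's identity nor the compound-sum variance formula $\mathbb{E}(S)\operatorname{Var}(\cdot)+(\mathbb{E}(\cdot))^2\operatorname{Var}(S)$ from Theorem~\ref{th:unbiasedvar} can be invoked directly. The Bernoulli-indicator reformulation resolves this by trading the random summation limit for fixed-range sums of independent (but non-identical) terms, after which both expectation and variance factor term-by-term. A secondary point I would state carefully is the independence of $B_j$ from the report $X_j$, which is what licenses the factorization $\mathbb{E}(Z_j)=\pi_j r_i$ and the Bernoulli form of $\operatorname{Var}(Z_j)$.
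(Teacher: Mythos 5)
Your proposal is correct and follows essentially the same route as the paper: both rewrite $\sum_{j=1}^{S}\mathbbm{1}_{\{X_j=i\}}$ as the fixed-range sum $\sum_{j=1}^{n}\mathbbm{1}_{\{X_j\text{ is sampled}\}}\mathbbm{1}_{\{X_j=i\}}$ of independent terms, use independence of sampling and privatization to factor expectations, and then take $f_i\to 0$ for the approximate variance. The only cosmetic difference is that you observe each product $Z_j$ is itself Bernoulli$(\pi_j r_i)$ and read off its variance directly, whereas the paper expands $\operatorname{Var}(Z_j)$ via the product-of-independent-variables variance formula; the two computations agree term by term.
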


\begin{remark}\label{rem:relatingvar}
Putting $\pi=\pi_1\ldots=\pi_n$ reduces $\operatorname{Var}^*(\mathcal{T}(i))$ to  $\operatorname{Var}^*(g_{\text{DE}}(i))$ and further, putting  $\pi_1=\ldots\pi_n=1$  reduces $\operatorname{Var}^*(\mathcal{T}(i))$ to  $\operatorname{Var}^*(c_{\text{DE}}(i))$ as in \cite{WangPureProtocols}, as expected.
\end{remark}

\begin{proof}
First we aim to show that $\mathcal{T}(i)$ is an unbiased estimator for any $i\in\mathcal{X}$.

\small
\begin{align}
\mathbb{E}(h(i))
= \mathbb{E}\left(\frac{\sum\limits_{j=1}^S\mathbbm{1}_{\{X_j=i\}}-nq}{p-q}\right) \nonumber \\
= \frac{\mathbb{E}\left(\sum\limits_{j=1}^n\mathbbm{1}_{\{X_{j} \text{ is sampled}\}}\mathbbm{1}_{\{X_{j}=i\}}-nq\right)}{p-q}  \nonumber \\
\text{[As sampling \& privatization are independent]}\nonumber\\
= \frac{\sum\limits_{j=1}^n\mathbb{E}\left(\mathbbm{1}_{\{X_{j} \text{ is sampled}\}}\right)\mathbb{E}\left(\mathbbm{1}_{\{X_{j}=i\}}\right)-nq}{p-q}  \nonumber \\
= \frac{\sum\limits_{j=1}^n\mathbb{P}\left(\mathbbm{1}_{\{X_{j} \text{ is sampled}\}}\right)\mathbb{P}\left(\mathbbm{1}_{\{X_{j}=i\}}\right)-nq}{p-q} \nonumber \\
= \frac{\sum\limits_{j=1}^n\pi_j(f_ip+(1-f_i)q) - nq}{p-q}
= \sum\limits_{j=1}^n\pi_jf_i-\frac{q(n-\sum\limits_{j=1}^n\pi_j)}{p-q}\nonumber
\end{align}
\normalsize

Note that  $\mathbb{E}(S)=\mathbb{E}\left(\mathbbm{1}_{\{x_j \text{ is sampled}\}}\right)=\sum\limits_{j=1}^n\pi_j \leq n$. Therefore, $\frac{q(n-\sum\limits_{j=1}^n\pi_j)}{p-q} \geq 0$. Hence, we define $\mathcal{T}(i)=\frac{nh(i)}{\sum\limits_{j=1}^n\pi_j}+\frac{nq(n-\sum\limits_{j=1}^n\pi_j)}{\sum\limits_{j=1}^n\pi_j(p-q)}$ as the frequency estimate of the true value $i$. Because of linearity of expectation, we get $\mathbb{E}(\mathcal{T}(i))=nf_i$, giving us an unbiased estimator for the general case where each node can have a different probability of being sampled. Putting $\pi = \pi_1=\pi_2=\ldots\pi_n$ reduces $\mathcal{T}(i)$ to $g_{\text{DE}}(i)$ which is what we would expect.  

Now we focus on computing $\operatorname{Var}^*\left(\mathcal{T})(i)\right)$ by first evaluating the actual variance of $\mathcal{T}(i)$. We obtain:

\small
\begin{align}
\operatorname{Var}(\mathcal{T}(i))
= \operatorname{Var}\left(\frac{nh(i)}{\sum\limits_{j=1}^n\pi_j}+\frac{q\left(n-\sum\limits_{j=1}^n\pi_j\right)}{\sum\limits_{j=1}^n\pi_j(p-q)}\right)
= \frac{n^2\operatorname{Var}(h(i))}{(\sum\limits_{j=1}^n\pi_j)^2}\nonumber
\end{align}

\begin{align}
= \frac{n^2}{(\sum\limits_{j=1}^n\pi_j)^2}\operatorname{Var}\left(\frac{\sum\limits_{j=1}^S\mathbbm{1}_{\{X_j=i\}}-nq}{p-q}\right)
= \frac{n^2\operatorname{Var}\left(\sum\limits_{j=1}^S\mathbbm{1}_{\{X_j=i\}}\right)}{(\sum\limits_{j=1}^n\pi_j)^2(p-q)^2} \nonumber \\
= \frac{n^2\operatorname{Var}\left(\sum\limits_{j=1}^n\mathbbm{1}_{\{X_j\text{ is sampled}\}}\mathbbm{1}_{\{X_j=i\}}\right)}{(\sum\limits_{j=1}^n\pi_j)^2(p-q)^2} \nonumber \\
= \frac{n^2}{(\sum\limits_{j=1}^n\pi_j)^2(p-q)^2}\left(\sum\limits_{j=1}^n(\operatorname{Var}\left(\mathbbm{1}_{\{X_j\text{ is sampled}\}}\right)\operatorname{Var}\left(\mathbbm{1}_{\{X_j=i\}}\right) \right.\nonumber\\
\left.+\operatorname{Var}\left(\mathbbm{1}_{\{X_j\text{ is sampled}\}}\right)\mathbb{E}\left(\mathbbm{1}_{\{X_j=i\}}\right)^2\right.\nonumber\\
\left.+\mathbb{E}\left(\mathbbm{1}_{\{X_j\text{ is sampled}\}}\right)^2\operatorname{Var}\left(\mathbbm{1}_{\{X_j=i\}}\right)\right)\nonumber \\
= \frac{n^2\sum\limits_{j=1}^n(\pi(1-\pi)((f_ip(1-p)+(1-f_i)q(1-q))}{(\sum\limits_{j=1}^n\pi_j)^2(p-q)^2} \nonumber \\
+  \frac{(f_ip+(1-f_i)q)^2)}{(\sum\limits_{j=1}^n\pi_j)^2(p-q)^2}
+ \pi^2(f_ip(1-p)+(1-f_i)q(1-q))) \nonumber \\
\implies \operatorname{Var}^*(\mathcal{T}(i))=\frac{n^2\sum\limits_{j=1}^n(q\pi_j(1-q\pi_j))}{(\sum\limits_{j=1}^n\pi_j)^2(p-q)^2}\nonumber
\end{align}
\normalsize
\end{proof}

\section{Conclusion and way forward}
Sampling of nodes and its impact on accuracy of the trained models, statistical analysis of the data, and aspects of privacy have been at the epicentre of research in the areas of federated learning. The results in this paper enable us to have an unbiased estimate for the frequency of elements of a domain of values which are held by the users. We also get an insight on how the sampling affects the utility of the estimators and the accuracy of estimating the true distribution of the data. 

In Figure \ref{TV dist}, we observe that after a point, the TV distance doesn't decrease significantly compared to how much the communication cost increases, raising some interesting open questions: Should we go on till sampling every single node? Where should we stop? In fact, the first plot of Figure \ref{TV dist} shows sampling each node with probability 0.1 and sampling every single node do not engender a drastic difference in the TV distance. In particular, we would like to highlight some interesting open questions leading on from this work:

\begin{itemize}
\item[i)]\emph{Uniform sampling}: As we proposed an unbiased frequency estimator of the values which are sampled from the users with any arbitrary probability distribution, it would be an interesting area of analysis to first get an initial idea of the sampling distribution in the first round using $\mathcal{T}$, and then use that to our advantage to revise the sampling probabilities of each value inversely proportional to their frequencies so that we can ensure that a sample of the dataset we wish to derive doesn't over-represent a certain value and under-represent some others. In the context of FL, this can be a key area for ensuring a fair model which is not heavily influenced by the mode of the data, making the model more biased towards the majorities, which might not be the desirable outcome for certain tasks, e.g., facial recognition, text prediction, etc. It would be a challenging area to investigate how such a mechanism would perform in the aspect of the communication cost vs utility trade-off against the state-of-the-art differentially private FL techniques~\cite{Abadi2016}, especially for high dimensional data.

    \item[ii)]\emph{Shuffling}: Privacy amplification methods have been recently studied a lot involving the shuffle model. If we look to apply shuffling to the LDP data using DE as the local randomizer, that should mean we should have a high level of central differential privacy guarantee using a lower intensity of local noise using the recent advancements and studies for deriving the amplified formal central differential privacy guarantees using shuffling~\cite{balle2019privacy,feldman2020hiding,koskela2021tight,erlingsson2019amplification}. As the estimators we proposed, both $c_{\text{DE}}$ and $\mathcal{T}$, are functions function of the underlying LDP mechanism used -- in particular, the obfuscating probability distribution which is dependant on $\epsilon$ -- it is obvious that a higher value of $\epsilon$ will engender a better bound. The introduction of shuffling would guarantee that the privacy of the users would not be compromised, as we can tune the final level of central DP guarantee quite high for even a high value of the privacy parameter of DE, which is the local randomizer used in this process.
        
    Thus, it would be an interesting comparison to have between variance bounds of the estimated frequencies of the shuffle model with DE using our proposed estimates, and the variance of the observed data under the central Gaussian mechanism, which is essentially the maximum likelihood estimate of the original distribution of the data, under the same level of the privacy parameters. Depending on the behaviour, we could hypothesize on the requirement of the number of samples and the sampling probabilities that would ensure a tighter variance for our proposed estimates. 
        
    \item[iii)]\emph{Personalised sampling}: Another very interesting direction this work leads on to is to see if techniques like the Lagrange multiplies could be used to find the optimal sampling distribution ($\pi_1,\ldots,\pi_n$) that would minimize the variance of the estimator that we derived under the constraint that ($\pi_1,\ldots,\pi_n$) is a probability distribution. In other words, we would like to focus on the optimization problem where we wish to $\operatorname{Var}^*(\mathcal{T})(i)$ for every value $i\in\mathcal{X}$ such that $0\leq\pi_j\leq 1$ for every $j\in\{1,\ldots,n\}$ and $\sum\limits_{j=1}^n$. The problem would be straightforward if we wished to minimize $\operatorname{Var}^*(\mathcal{T})(i)$ for some fixed $i$, but becomes increasingly challenging when we are dealing with minimizing all the variances at an the minimum, under some multi-dimensional metric, giving us the optimal $(\pi_1,\ldots,\pi_n)$. This approach would enable us to find the optimal sampling probability that would give the minimum variance for our proposed unbiased estimators.
\end{itemize}

\bibliographystyle{IEEEtran}
\bibliography{references}

\end{multicols}

\end{document}